\documentclass[12pt]{article}
\usepackage{comment}
\usepackage{natbib}
\usepackage{fullpage}
\usepackage{times}
\usepackage[normalem]{ulem}
\usepackage{fancyhdr,graphicx,amsmath,amssymb, mathtools, scrextend, titlesec, enumitem}
\usepackage[ruled,vlined]{algorithm2e} 
\newtheorem{theorem}{Theorem}

\newtheorem{proposition}[theorem]{Proposition}
\newtheorem{remark}[theorem]{Remark}

\newenvironment{proof}[1][Proof]{\noindent\textbf{#1.} }{\ \rule{0.5em}{0.5em}}

\title{Norm-Relevant Messages under Uncertainty}
\author{Senran Lin\thanks{Southwestern University of Finance and Economics: senranlin@outlook.com}}

\begin{document}
\maketitle

\begin{abstract}
    Social-information messages are widely used to influence norm perceptions and norm-relevant behavior, but their effects may depend on what they report and how they are disclosed. This paper develops a belief-based framework connecting personal values, perceived social norms, and empirical expectations through inference about a latent appropriateness standard. The framework identifies two mechanisms: reported statistics encode previous participants' private information differently, and public disclosure changes what recipients believe others know and infer. These mechanisms imply stronger effects for perceived-norm than personal-value messages, and for public than private disclosure.
\end{abstract}

Social-information messages are widely used to influence norm-relevant behavior. Experiments and interventions often inform individuals what others did, what others personally think is appropriate, or what others believe the reference group approves of.\footnote{For example, social-norms messaging interventions use social proof, social comparison, injunctive, and combined messages; see \cite{PapakonstantinouEtAl2025}.} These messages are usually understood as different forms of norm-relevant information, and a growing empirical literature compares how different norm-related messages shape norm perceptions, expectations, and behavior \citep{BicchieriXiao2009,SchultzNolanCialdini2007,BonanCattaneoDAddaTavoni2020,PapakonstantinouEtAl2025}.

The central question is whether these different forms of information should have the same impact, or whether norm-relevant statistics affect perceived norms, and through them, norm-relevant behavior in systematically different ways. Does it matter whether a message reports elicited personal values or perceived social norms? 

This question is connected to a central distinction in the social-norms literature. Norm-relevant beliefs are commonly grouped into personal values, perceived injunctive social norms, and empirical expectations. A personal value captures what an individual herself regards as appropriate; a perceived injunctive social norm captures what she believes others regard as appropriate; and an empirical expectation captures what she believes others will do. Experimental studies often elicit these measures separately and find that they are distinct but related. This distinction matters for social-information messages: because these beliefs are related but not identical, messages based on them can convey different information.

This paper develops a belief-based framework to study this issue. The central insight is that norm perception can be understood as an inference problem. Individuals may not know the relevant appropriateness standard in a given situation; instead, they form assessments from imperfect private cues. The same latent standard therefore connects personal values, perceived injunctive social norms, and empirical expectations. These concepts are distinct, but they are connected because they are different perspectives on the underlying appropriateness standard.

This perspective implies that norm-relevant messages are not generally informationally equivalent. A message reporting what others personally think is appropriate need not convey the same information as a message reporting what others believe the group regards as appropriate. Likewise, a message about what others did is less direct: actions reflect perceived norms only after passing through material incentives, constraints, and motives for norm compliance.

The framework clarifies two belief-based mechanisms through which norm-relevant messages affect perceived norms. The first concerns what is disclosed: previous participants' personal values or their perceived social norms. The framework focuses on the comparison between these two statistics. Both concern appropriateness and can be reported on the same scale. It shows, however, that these two statistics encode private cues differently. This implies that they will shift current perceived norms to different extents. Observed actions can also be nested in the framework, but only indirectly: they convey information about perceived norms through the motives that translate norm perceptions into choices.
 
The second concerns disclosure mode: holding fixed whether the message reports elicited personal values or perceived social norms, does it matter whether the information is disclosed privately to one individual or publicly to the current group? Under public disclosure, the recipient knows not only the reported statistic, but also that other current-group members observe the same statistic and update their assessments accordingly. Because perceived norms involve beliefs about others' assessments \citep{Bicchieri2006}, this higher-order knowledge creates an additional channel through which public disclosure can shift perceived norms.

This paper contributes by providing a theoretical foundation for studying the form and disclosure mode of norm-relevant messages. Existing work distinguishes personal values, perceived injunctive norms, and empirical expectations \citep{Bicchieri2006,BicchieriChavez2010,KrupkaWeber2013,dAddaDufwenbergPassarelliTabellini2020}. 
Less is known, however, about how these conceptual distinctions translate into the effects of social-information messages, or how those effects depend on whether the same message is disclosed privately or publicly. The framework developed here provides one way to see why they need not be informationally equivalent: different statistics encode previous participants' private cue information in different ways, and disclosure mode determines whether the message changes only individual inference or also higher-order beliefs about others' updated assessments. Thus, what a norm-relevant message does depends not only on the value it reports, but also on how that value was generated and who is known to observe it.


\section{Framework}\label{sec:framework}
We consider environments in which each individual chooses a single action. In such environments, what is appropriate can be represented by a scalar benchmark. Appropriateness is the common basis for personal values and perceived injunctive social norms. The framework provides a structure for analyzing how individuals infer this benchmark, how these inferences generate personal values and perceived social norms, and how behavior is shaped.

\subsection{Appropriateness uncertainty}
Injunctive social norms concern appropriateness: what behavior is regarded as \emph{appropriate} in a given situation. To formalize this idea, we represent the relevant \textbf{appropriateness standard} by a scalar $S$, which serves as a situation-specific benchmark against which actions are evaluated.\footnote{The standard need not be feasible; it serves as a benchmark for evaluating behavior.} For example, in a dictator game, $S$ may be interpreted as an appropriate transfer share; in a public goods game, as an appropriate contribution rate; in a tax compliance setting, as an appropriate declaration level; and in an honesty task, as an acceptable lying threshold.

The standard $S$ is not itself a social norm or a personal value. Rather, it is the underlying benchmark about which individuals form assessments.
We depart from the assumption that this benchmark is commonly known \citep[e.g.,][]{FehrSchurtenberger2018}. Individuals do not directly observe this benchmark, may form different assessments of it, and may hold heterogeneous beliefs about how others assess it. We model $S$ as an exogenous random variable and assume
\begin{equation*}
    S \sim N(\mu_s,\nu_s),
\end{equation*}
where $\mu_s$ is the prior mean of the appropriate standard and $\nu_s$ captures uncertainty about it. The distribution of $S$ being environment-specific.
The variance $\nu_s$ reflects the \textbf{appropriateness uncertainty}: it is higher when the situation admits a more controversial, complex, or weakly defined appropriateness benchmark. This specification captures the idea that individuals may share a common but imprecise prior view of what is appropriate in a given situation.\footnote{For tractability, we adopt a Gaussian prior–signal structure, which yields closed-form posterior beliefs. Similar approaches are commonly used in related noisy-cognition models; see, e.g., \cite{MooreHealy2008,KhawLiWoodford2021,Woodford2020,EnkeGraeber2023,AzeredoDaSilveiraWoodford2019}.} 

To generate such heterogeneity in a disciplined way, we follow Harsanyi's perspective that differences in beliefs can be traced to differences in information. 
Drawing on the noisy-cognition literature, we model this information as a private cue about the appropriateness standard. Individuals do not directly observe the realized standard, but instead receive
\begin{equation*}
    Y_i=S+\epsilon_i,
    \qquad
    \epsilon_i \overset{i.i.d.}{\sim} N(0,\nu_\epsilon),
\end{equation*}
with realized cue $y_i$. This formulation captures an imperfect perception of a common latent benchmark, without taking a stand on how the appropriateness standard itself is formed. The realized cue $y_i$ is individual-specific and may arise from gut feeling, attention, personal experience, or memory \citep{MooreHealy2008,KhawLiWoodford2021,OpreaVieider2024,Mullainathan2002}. The noise term $\epsilon_i$ captures the fact that this cue is imperfect. Its variance $\nu_{\epsilon}$ measures the idiosyncratic noise, independent of $S$. A higher $\nu_{\epsilon}$ may reflect limited attention, imprecise recall, or noisier encoding of relevant features. Belief heterogeneity in perceived appropriateness therefore arises from such noisy private cues.

\subsection{Personal values and perceived social norms}

Under this framework, individual $i$'s \textbf{personal value}, sometimes referred to as a personal norm, captures her own assessment of what action is appropriate given her realized cue $y_i$. It is defined as
\begin{equation*}
    r_i:=\mathbb{E}[S\mid y_i].
\end{equation*}
Although the appropriateness benchmark is exogenous, personal values are endogenous beliefs about that benchmark.
Given the information structure above, Bayesian updating gives
\begin{equation}\label{eq:value_cue_init}
    r_i =
    \frac{\nu_\epsilon}{\nu_s+\nu_\epsilon}\mu_s
    +
    \frac{\nu_s}{\nu_s+\nu_\epsilon}y_i.
\end{equation}
Within this specification, $r_i$ is a weighted average of the prior mean $\mu_s$ and the private cue $y_i$. The cue receives more weight when appropriateness uncertainty is higher, i.e., when $\nu_s$ is larger, whereas the prior mean receives more weight when private cues are noisier, i.e., when $\nu_\epsilon$ is larger.

We next define the individual's perceived injunctive social norm. Consistent with Bicchieri's expectation-based account of social norms, we interpret this as individual $i$'s belief about what others in the reference group regard as appropriate. In the present environment, we represent others' assessments by their average. Let $H_i$ denote individual $i$'s information set. Formally, individual $i$'s perceived social norm is
\begin{equation*}
    \mathbb{E}[N_{-i}|H_i],
\end{equation*}
where 
\begin{equation*}
    N_{-i} :=
    \mathrm{avg}_{j\in I\setminus \{i\}}\,\mathbb{E}[S \mid H_j]
\end{equation*}
is the average assessment of appropriateness among the other individuals, evaluated at their own information sets.
This differs from the personal value in two ways. First, it concerns others' assessments rather than $i$'s own assessment. Second, it is formed from the information set $H_i$. This feature is central to the analysis for social-information intervention below: when individual $i$ receives additional norm-relevant information, her information set changes, and so may her perceived social norm.\footnote{For instance, if individual $i$ observes an additional statistic $X$, then $H_i=\{(y_i,X)\}$.  }

Expanding the definition gives:
\begin{equation}\label{eq:perceived_norm}
     \mathbb{E}[N_{-i}\mid H_i]
    =\mathbb{E}[\mathrm{avg}_{j\in I\setminus \{i\}}\,\mathbb{E}[S\mid H_j]\mid H_i].
\end{equation}
This expression makes explicit that perceived social norms are beliefs about others' assessments of appropriateness. Thus, if $S$ represents an appropriate transfer share, contribution rate, tax declaration level, or acceptable lying threshold, then $N_{-i}$ represents the reference group's social assessment of appropriateness, and $\mathbb{E}[N_{-i}\mid H_i]$ represents $i$'s belief of that assessment. We focus on perceived norms, rather than a directly observed or commonly known social norm, following the norm-perception perspective emphasized by \cite{TankardPaluck2016}.

\paragraph{Minimal-information (MI) benchmark}
As a benchmark, we first consider the minimal-information (MI) case, in which individuals receive no social-information intervention and observe only their own private cues. Thus, $H_i = \{y_i\}$ for all $i\in I$. In this case, 
\begin{equation*}
N_{-i} =
\mathrm{avg}_{j\in I\setminus{\{i\}}} r_j.
\end{equation*}
With a slight abuse of notation, we write the \emph{initial perceived social norm} $\mathbb{E}[N_{-i} \mid y_i]$, which represents individual $i$'s belief about the average personal values of others.

The MI case already yields a key distinction: personal values and perceived social norms are connected, but they need not coincide. They reflect different perspectives on the same underlying appropriateness standard.
\begin{remark}
        In the MI case, the perceived social norm is given by
\begin{equation}\label{eq:perceived_norm_ini_value}
    \mathbb{E}[N_{-i}\mid y_i]
    =
    \frac{\nu_\epsilon}{\nu_s+\nu_\epsilon}\mu_s
    +
    \frac{\nu_s}{\nu_s+\nu_\epsilon}\mathbb{E}[S\mid y_i].
\end{equation}
    Hence, relative to individual $i$'s personal value $r_i=\mathbb{E}[S\mid y_i]$, the perceived social norm is pulled toward the prior mean. It need not coincide with her personal value, even before any social-information message is disclosed.
\end{remark}
Using the expression for $r_i$, this can also be written as
\begin{equation}\label{eq:perceived_norm_full_MI}
\mathbb{E}[N_{-i}\mid y_i]=
\frac{\nu_\epsilon(\nu_\epsilon+2\nu_s)}{(\nu_s+\nu_\epsilon)^2}\mu_s
+
\left(\frac{\nu_s}{\nu_s+\nu_\epsilon}\right)^2 y_i.
\end{equation}

The remark gives the baseline relation between the two measures. Individual $i$ uses her own assessment, summarized by $r_i$, to infer others' assessments, but she also accounts for the fact that others form their own assessments from their own private cues. As a result, her belief about others' average assessment is pulled toward the prior mean $\mu_s$ rather than coinciding with $r_i$.

This relation is useful because it shows that the distinction between personal values and perceived social norms is systematic rather than merely verbal. Both are derived from beliefs about the same appropriateness standard, but they take different perspectives on that standard. This distinction will matter when social-information messages disclose different norm-relevant measures.



\subsection{Behavior and empirical expectations}
We then relate perceived social norms to behavior through a simple model in which an individual trades off material payoffs against deviations from her perceived social norm. Each individual chooses a single action $a_i\in A \subseteq \mathbb{R}_{+}$. Let $\pi_i(a_i,a_{-i})$ denote individual $i$'s material payoff. We assume that individual $i$ incurs disutility from deviating from her perceived social norm. That is, conditional on information set $H_i$, individual $i$ evaluates action $a_i$ according to the following decision utility function:
\begin{equation}
\begin{aligned}
    U_i(a_i|H_i)
    &=
    \mathbb{E}[\pi_i(a_i,a_{-i})|H_i]
    -\theta \left(a_i-\mathbb{E}[N_{-i}|H_i]\right)^2,
\end{aligned}
\end{equation}
where $\mathbb{E}[N_{-i}\mid H_i]$ is the perceived social norm\footnote{This specification follows the norm-perception perspective that behavior responds to individuals' perceptions of social norms; see \cite{TankardPaluck2016}.} and $\theta>0$ is a homogeneous norm-sensitivity parameter. Deviations from the perceived social norm are costly, with marginal costs increasing in the size of the deviation.

Observed actions can therefore be informative about perceived social norms. To make this link explicit, consider a \emph{unit-cost contribution environment} (UCE), in which, holding others' actions fixed, a higher own action carries a constant marginal material cost equal to one. This benchmark captures a common payoff structure in norm-relevant contribution and reporting settings, after suitable normalization.\footnote{Such settings include the earlier examples of public-good contributions, dictator-game transfers, tax declarations, and self-report honesty tasks.} In the UCE, individual $i$'s optimal action is
\begin{equation}\label{eq:UCE_action}
    a_i=\max\left\{ \mathbb{E}[N_{-i}\mid H_i]-\frac{1}{2\theta},0\right\}.
\end{equation}
When the action is interior, $a_i$ is increasing one-for-one in the perceived social norm.

We next define empirical expectations. An empirical expectation, sometimes referred to as a descriptive norm, concerns what others are expected to do. It differs from a perceived injunctive social norm: the former concerns anticipated behavior, whereas the latter concerns what others regard as appropriate. Existing evidence suggests that the two are often correlated (see, e.g., \citealt{Thogersen2008,KolleQuercia2021,PrzepiorkaSzekelyAndrighetto2022}).
Our framework defines individual $i$'s \textbf{empirical expectation} as 
\begin{equation*}
    \mathbb{E}\!\left[\mathrm{avg}_{j\in I\setminus \{i\}} a_j \mid H_i\right],
\end{equation*}
which is $i$'s expectation of the average action taken by the other group members. Since $a_j$ is affected by $\mathbb{E}[N_{-j}\mid H_j]$, this expectation inherits a dependence on the perceived social norms that enter others' decisions.

To make this connection explicit, return to the MI case. In the UCE benchmark, when the relevant actions are interior, the following relationship follows: 
\begin{equation}
    \begin{aligned}
        \mathbb{E}\!\left[\mathrm{avg}_{j\in I\setminus \{i\}} a_j \mid y_i\right] 
        &= \mathbb{E}\!\left[\mathrm{avg}_{j\in I\setminus \{i\}} \mathbb{E}[N_{-j}|y_j] \mid y_i\right]-\frac{1}{2\theta}\\
        &=\frac{\nu_{\epsilon}}{\nu_{\epsilon}+\nu_s} \mu_s + \frac{\nu_s}{\nu_{\epsilon}+\nu_s} \mathbb{E}[\mathrm{avg}_{j\in I\setminus\{i\}}\mathbb{E}[S|y_j]| y_i] -\frac{1}{2\theta}\\
        &=\frac{\nu_{\epsilon}}{\nu_{\epsilon}+\nu_s} \mu_s + \frac{\nu_s}{\nu_{\epsilon}+\nu_s} \mathbb{E}[N_{-i}|y_i] -\frac{1}{2\theta}.
    \end{aligned}
\end{equation}
This expression makes the distinction between perceived social norm and empirical expectation explicit while showing why they can be systematically linked. This relation illustrates that, when behavior responds to perceived social norms, expectations about behavior inherit this dependence. The coefficient $\nu_s/(\nu_{\epsilon}+\nu_s)$ summarizes the strength of this link, which is stronger when private cues receive more weight in beliefs about appropriateness.\footnote{The closed-form coefficient follows from the maintained Gaussian prior-cue structure.}

This link also clarifies an important measurement distinction: elicited empirical expectations are belief measures, not substitutes for realized actions. 
\begin{remark} In the MI case, under the UCE benchmark,
\begin{equation*}
    \mathrm{avg}_{i\in I}\left(
\mathbb{E}\!\left[
\mathrm{avg}_{j\in I\setminus\{i\}} a_j
\,\middle|\, y_i
\right]
\right)
=
\frac{\nu_\epsilon}{\nu_\epsilon+\nu_s}\mu_s
+
\frac{\nu_s}{\nu_\epsilon+\nu_s}
\mathrm{avg}_{i\in I}\mathbb{E}[N_{-i}\mid y_i]
-\frac{1}{2\theta},
\end{equation*}
whereas \begin{equation*}
    \mathrm{avg}_{i\in I} \left(a_i\right)=\mathrm{avg}_{i\in I}\mathbb{E}[N_{-i}\mid y_i]
-\frac{1}{2\theta}.
\end{equation*}
\end{remark}
Thus, the average elicited empirical expectation is generally not the same as the realized average action. This distinction also matters for social-information messages: previous actions and elicited empirical expectations need not convey the same information.

The mechanism is not specific to the UCE benchmark. More generally, when behavior is shaped by material incentives and norm compliance, expectations about others' actions are connected to beliefs about their perceived social norms. Furthermore, since this inference is imperfect, subjective expectations need not coincide with the realized group average action.

By linking perceived social norms to actions and empirical expectations, the framework clarifies why different norm-relevant messages need not have the same effect. Messages based on personal values, perceived social norms, empirical expectations, or observed actions convey different information relevant to the inference about the underlying appropriateness standard. The following sections use this logic to study how disclosed personal values and disclosed perceived social norms, considered separately, affect recipients' perceived social norms, and how these effects depend on both the disclosed statistic and the disclosure mode.


\section{Message Forms}\label{sec:message_forms}

Social-information interventions often use different message forms that report norm-relevant statistics with the aim of shifting perceived social norms and subsequent behavior \citep{Paluck2009,PaluckGreen2009,TankardPaluck2016,BicchieriChavez2010,BursztynEgorovFiorin2020}.
We now study how such information enters individual norm perception. In the analysis below, individuals receive norm-relevant information collected from a separate previous group while forming perceived norms about their current reference group. Thus, the disclosed information affects perceived norms through what it implies about the underlying appropriateness standard, rather than through direct observation of reference-group assessments or actions.

Let $K$ denote the set of individuals in the previous group, and let $x_k$ denote the report collected from previous-group member $k\in K$. In the analysis below, previous-group information is not part of the reference-group norm the recipient is trying to infer; its relevance is informational, through what it implies about the underlying appropriateness standard. We focus on two message forms: previous participants' elicited personal values and their elicited perceived social norms. Both are norm-relevant belief measures, but they need not play the same informational role. The framework makes this distinction explicit by showing how different message forms summarize previous participants' private cues and allowing their effects on the recipient's perceived social norm to be compared.

To focus on the recipient's own updating, this section considers private disclosure. Individual $i$ is the recipient: she privately observes $(x_k)_{k\in K}$, while the other members of her current reference group observe only their own private cues.\footnote{That is, $H_i=\{y_i,(x_k)_{k\in K}\}$ and $H_j=\{y_j\}$ for all $j\in I\setminus\{i\}$.} These privately observed reports therefore affect $i$'s perceived social norm through her own inference about the underlying appropriateness standard, rather than through any update by her current reference-group members.

\paragraph{Previous-group cue decomposition}  

We first decompose disclosure effects through previous-group cues, before turning to how elicited message forms summarize that information. If $H_i=\{y_i,(y_k)_{k\in K}\}$, while $H_j=\{y_j\}$ for each $j\in I\setminus \{i\}$, then individual $i$'s perceived social norm satisfies 
\begin{equation}\label{eq:private_main}
    \mathbb{E}\!\left[N_{-i} \middle| y_i,(y_k)_{k\in K}\right] 
    =
\frac{\nu_{\epsilon}}{\nu_s+\nu_{\epsilon}} \mu_s
+
\frac{\nu_s}{\nu_s+\nu_{\epsilon}}
\mathbb{E}[S \mid y_i, (y_k)_{k\in K}].
\end{equation}
The structure is parallel to Equation~\eqref{eq:perceived_norm_ini_value} in the MI case. Since other current-group members still condition only on their own cues, previous-group cues affect $i$'s perceived social norm only through her updated assessment of $S$.

Let \begin{equation*}
    \bar{y}_K:=\mathrm{avg}_{k\in K} y_k 
\end{equation*}
denote the average previous-group cue. Under the framework's information structure, $\bar{y}_K$ is sufficient for the previous-group cue realizations in $i$'s inference about $S$. Substituting 
\begin{equation}\label{eq:assessment_i_k}
    \mathbb{E}[S \mid y_i, (y_k)_{k\in K}]=\frac{
    \nu_\epsilon\mu_s
    +
    \nu_s y_i
    +
    |K|\nu_s\bar{y}_K
}{
    \nu_\epsilon+(|K|+1)\nu_s},
\end{equation}
into \eqref{eq:private_main} yields
\begin{equation}\label{eq:private_bench}
    \mathbb{E}\!\left[N_{-i}\middle| y_i,(y_k)_{k\in K}\right] =\begin{aligned}
        \frac{
\nu_\epsilon\big(\nu_\epsilon+(|K|+2)\nu_s\big)\mu_s
+
\nu_s^2 y_i
+
|K|\nu_s^2\bar{y}_K
}{
(\nu_s+\nu_\epsilon)\big(\nu_\epsilon+(|K|+1)\nu_s\big)
} \text{.}
    \end{aligned}
\end{equation}
Thus, individual $i$'s perceived social norm is a weighted average of the prior mean $\mu_s$, her own cue $y_i$, and the previous-group average cue $\bar{y}_K$.

The comparative statics of this cue-decomposition effect are intuitive. We refer to the marginal effect of $\bar {y}_K$ in \eqref{eq:private_bench} as the \textbf{cue-decomposition effect}: it captures how previous-group cue information would enter perceived-norm updating before accounting for how that information is encoded in a disclosed statistic. This effect increases with the number of previous-group cues, $|K|$. It also increases with appropriateness uncertainty, $\nu_s$, and decreases with cue noise, $\nu_\epsilon$, because these changes make private cues more informative relative to the prior mean.

Equation \eqref{eq:private_main} shows that, under private disclosure, previous-group cue information shifts $i$'s perceived social norm through her assessment of $S$. Therefore, the effect of a disclosed statistic depends on how it summarizes the previous group's information about $S$. We next turn to two such message forms: elicited personal values and elicited perceived social norms.

\paragraph{Elicited personal values}

We then analyze the message form that reports previous participants' elicited personal values. These values summarize their own assessments of $S$, formed from their private cues.

Denote the average of these \textbf{elicited personal values} by
\begin{equation*}
    \bar{r}_K:=\mathrm{avg}_{k\in K}\mathbb{E}[S\mid y_k].
\end{equation*}
The statistic $\bar{r}_K$ summarizes $\bar {y}_K$ through the mapping from private cues into reported personal values. We refer to the mapping from cues into a disclosed statistic as the \textbf{statistic-encoding effect}. Using the relation \eqref{eq:value_cue_init}, individual $i$ can infer the average private cue from $\bar{r}_K$
\begin{equation}\label{eq:k_value_to_cue}
    \bar{y}_K= \frac{(\nu_s+\nu_{\epsilon})}{\nu_s} \bar{r}_K - \frac{\nu_{\epsilon}}{\nu_s} \mu_s.
\end{equation}
Thus, within the framework, $\bar{r}_K$ is an encoded summary of $\bar{y}_K$.
Together with \eqref{eq:private_bench}, this implies that, after privately observing $\bar{r}_K$, $i$'s perceived social norm is
\begin{equation}\label{eq:private_k_per_value}
    \mathbb{E}\!\left[
    N_{-i}
    \middle| y_i,\bar{r}_K
    \right]
    =
    \frac{
    \nu_\epsilon(\nu_\epsilon+2\nu_s)\mu_s
    +
    \nu_s^2 y_i
    +
    |K|\nu_s(\nu_s+\nu_\epsilon)\bar{r}_K
    }{
    (\nu_s+\nu_\epsilon)\big(\nu_\epsilon+(|K|+1)\nu_s\big)
    }.
\end{equation}
Thus, individual $i$'s perceived social norm is a weighted average of the prior mean $\mu_s$, her own cue $y_i$, and the previous-group average elicited personal value $\bar{r}_K$. The average $\bar{r}_K$ therefore summarizes the relevant previous-group information entering $i$'s updating. 

In sum, under private disclosure, $\bar{r}_K$ acts as an additional signal about $S$. Its weight combines the cue-decomposition effect of $\bar{y}_K$ with the statistic-encoding effect through which $\bar{r}_K$ encodes $\bar{y}_K$. This combined effect follows the same comparative statics as the cue-decomposition effect: it increases with $|K|$, increases with appropriateness uncertainty $\nu_s$, and decreases with cue noise $\nu_{\epsilon}$. These changes make the previous-group average elicited personal values more informative about the underlying standard.

\paragraph{Elicited perceived social norms}

We next analyze the message form that reports previous participants' elicited perceived social norms. That is, what they believe their peers consider appropriate, based on their own private cues.

Denote the average of these \textbf{elicited perceived social norms} by
\begin{equation*}
    \hat{N}_K:=\mathrm{avg}_{k\in K}\big(\mathbb{E}[N_{-k}\mid y_k]\big).
\end{equation*}

The statistic $\hat{N}_K$ is an encoded summary of $\bar{y}_K$, but through a different encoding from $\bar {r}_K$. Using the relation in \eqref{eq:perceived_norm_full_MI}, individual $i$ can infer the average previous-group cue from $\hat{N}_K$:
\begin{equation}\label{eq:k_norm_to_cue}
    \bar{y}_K
    =
    \left(\frac{\nu_s+\nu_\epsilon}{\nu_s}\right)^2
    \hat{N}_K
    -
    \frac{\nu_\epsilon(\nu_\epsilon+2\nu_s)}{\nu_s^2}\mu_s.
\end{equation}
The coefficient on $\mu_s$ is negative because $\hat{N}_K$ is not a direct measure of $\bar{y}_K$. Previous participants transform their private cues into perceived-norm reports by combining cue information with the prior mean, $\mu_s$.
Hence, $\hat{N}_K$ reflects the previous group's cue information only through a report already adjusted toward the prior mean. Recovering $\bar{y}_K$ therefore requires correcting for this prior pull, which appears as a negative coefficient on $\mu_s$.

Combining \eqref{eq:k_norm_to_cue} and \eqref{eq:private_main} yields
 \begin{equation}\label{eq:private_k_norm}
    \mathbb{E}\!\left[N_{-i}\middle| y_i,\hat{N}_K\right]
    =
    \frac{
    \nu_\epsilon\big((1-|K|)\nu_\epsilon+(2-|K|)\nu_s\big)\mu_s
    +
    \nu_s^2 y_i
    +
    |K|(\nu_s+\nu_\epsilon)^2\hat{N}_K
    }{
    (\nu_s+\nu_\epsilon)\big(\nu_\epsilon+(|K|+1)\nu_s\big)
    }.
\end{equation}
The coefficient on $\hat{N}_K$ is 
\begin{equation*}
     \frac{|K|(\nu_s+\nu_\epsilon)}
    {\nu_\epsilon+(|K|+1)\nu_s},
\end{equation*}
which decreases in $\nu_s$ and increases in $\nu_{\epsilon}$.

Unlike \eqref{eq:private_bench}, this expression need not be a weighted average of $\mu_s$, $y_i$, and $\hat{N}_K$: for $|K|\geq 2$, the coefficient on $\mu_s$ is negative. This negative coefficient carries over from the encoding correction in \eqref{eq:k_norm_to_cue}: $\hat{N}_K$ is informative about the previous group's cues only after correcting for the prior pull embedded in the perceived-norm report.

Observed previous actions fit into this case as an indirect norm-relevant statistic. For example, in the UCE benchmark with interior choices, each previous action satisfies $a_k=\mathbb{E}[N_{-k}\mid y_k]-1/(2\theta)$, so $\bar{a}_K=\hat{N}_K-1/(2\theta)$. Thus, observing $\bar{a}_K$ allows individual $i$ to infer the corresponding $\hat{N}_K$, and then infer the implied $\bar{y}_K$ through the encoding relation in \eqref{eq:k_norm_to_cue}.

In sum, under private disclosure, the effect of $\hat{N}_K$ depends on how this statistic encodes the previous-group cue information. The coefficient on $\hat{N}_K$ combines two forces. The cue-decomposition effect works through the informativeness of the previous group's cues about $S$; by itself, it is stronger when $\nu_s$ is larger or $\nu_{\epsilon}$ is smaller. The statistic-encoding effect works in the opposite direction. When $\nu_s$ is larger or $\nu_{\epsilon}$ is smaller, previous participants place more weight on their private cues when forming perceived-norm reports, so $\hat {N}_K$ is pulled less toward $\mu_s$. A given movement in $\hat{N}_K$ then corresponds to a smaller movement in the underlying $\bar{y}_K$, so individual $i$ places \emph{lower} weight on the statistic. Since the statistic-encoding effect dominates, the comparative statics are reversed.\footnote{In a macroeconomic setting, \cite{HuoPedroni2023} makes a related observation: the effect of new information depends on how strongly it encodes the private cues.}

\paragraph{Comparing the two message forms}

Under private disclosure, when $\hat{N}_K$ and $\bar{r}_K$ are reported on the same scale, comparing \eqref{eq:private_k_per_value} and \eqref{eq:private_k_norm} gives the first main result.
\begin{proposition}
    Under private disclosure, when $\hat{N}_K$ and $\bar{r}_K$ are reported on the same scale, the same numerical change in $\hat{N}_K$ induces a larger change in individual $i$'s perceived social norm than a change in  $\bar{r}_K$.
\end{proposition}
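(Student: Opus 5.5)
The plan is to compare the marginal effects of the two statistics directly, reading them off the closed-form expressions \eqref{eq:private_k_per_value} and \eqref{eq:private_k_norm}. Both describe individual $i$'s perceived social norm under private disclosure, and in both the disclosed statistic enters linearly. The two expressions also share the same denominator, $(\nu_s+\nu_\epsilon)\big(\nu_\epsilon+(|K|+1)\nu_s\big)$, and the same coefficient on $y_i$. Because the statistics are reported on the same scale, a common numerical change $\Delta$ in $\bar r_K$ or in $\hat N_K$ changes the perceived norm by $\Delta$ times the corresponding coefficient. The claim therefore reduces to showing that the coefficient on $\hat N_K$ exceeds the coefficient on $\bar r_K$.

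Concretely, the coefficient on $\bar r_K$ is
\[
\frac{|K|\nu_s(\nu_s+\nu_\epsilon)}{(\nu_s+\nu_\epsilon)\big(\nu_\epsilon+(|K|+1)\nu_s\big)},
\]
and the coefficient on $\hat N_K$ is
\[
\frac{|K|(\nu_s+\nu_\epsilon)^2}{(\nu_s+\nu_\epsilon)\big(\nu_\epsilon+(|K|+1)\nu_s\big)}.
\]
Their ratio is $(\nu_s+\nu_\epsilon)/\nu_s = 1+\nu_\epsilon/\nu_s$. This ratio is strictly greater than one whenever $\nu_\epsilon>0$, $\nu_s>0$, and $|K|\geq 1$, which are maintained assumptions since the variances are those of nondegenerate Gaussians. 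The inequality is strict, so the change induced by $\hat N_K$ is strictly larger in absolute value, and the two changes have the same sign.

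For intuition, and as a cross-check, I would also derive the ratio through the encoding relations rather than the final formulas. Both statistics affect the perceived norm only through the implied $\bar y_K$ in \eqref{eq:private_bench}. By \eqref{eq:k_value_to_cue}, $\partial\bar y_K/\partial\bar r_K=(\nu_s+\nu_\epsilon)/\nu_s$, while by \eqref{eq:k_norm_to_cue}, $\partial\bar y_K/\partial\hat N_K=\big((\nu_s+\nu_\epsilon)/\nu_s\big)^2$. The chain rule then gives the same ratio $(\nu_s+\nu_\epsilon)/\nu_s>1$. The interpretation is that a perceived-norm report is pulled toward $\mu_s$ twice, whereas a personal value is pulled only once. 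A unit movement in $\hat N_K$ therefore reveals a larger movement in previous-group cues.

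The only real obstacle is making sure the two displayed coefficients are correctly simplified, and that "same scale" is read as comparing equal numerical changes $\Delta$ with everything else held fixed. I would verify the simplification by the chain-rule route above, since it bypasses the intercept terms in \eqref{eq:private_k_norm}, including the possibly negative $\mu_s$ coefficient, which is irrelevant for the marginal comparison.
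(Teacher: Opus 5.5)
Your proposal is correct and follows the paper's proof: you read the coefficients on $\bar r_K$ and $\hat N_K$ off \eqref{eq:private_k_per_value} and \eqref{eq:private_k_norm}, note that both are positive, and show their ratio is $(\nu_s+\nu_\epsilon)/\nu_s>1$ (the paper states the reciprocal, $\nu_s/(\nu_s+\nu_\epsilon)<1$). Your chain-rule cross-check through \eqref{eq:k_value_to_cue} and \eqref{eq:k_norm_to_cue} is a valid consistency check, and it matches the statistic-encoding intuition the paper gives right after the proof.
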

\begin{proof}
By \eqref{eq:private_k_per_value}, the coefficient on $\bar{r}_K$ is
\begin{equation*}
    \frac{|K|\nu_s}{\nu_\epsilon+(|K|+1)\nu_s}.
\end{equation*}
By \eqref{eq:private_k_norm}, the coefficient on $\hat{N}_K$ is
\begin{equation*}
    \frac{|K|(\nu_s+\nu_\epsilon)}
    {\nu_\epsilon+(|K|+1)\nu_s}.
\end{equation*}
Both coefficients are positive, and the ratio of the former to the latter is
\begin{equation*}
    \frac{\nu_s}{\nu_s+\nu_\epsilon}<1.
\end{equation*}
\end{proof}

This difference stems from the different statistic-encoding effects of $\bar{r}_K$ and $\hat{N}_K$. Although these two statistics summarize the same previous-group cue information, a given numerical change in $\hat{N}_K$ corresponds to a larger inferred change in $\bar{y}_K$, and therefore induces a larger change in individual $i$'s perceived social norm. Thus, elicited personal values and elicited perceived social norms are not interchangeable message forms: their effects depend on how each statistic encodes the previous-group cue information.\footnote{For related models in which responses depend on how underlying information is encoded, see \cite{MooreHealy2008,FrydmanJin2022,OpreaVieider2024}.}

\section{Disclosure Mode}\label{sec:disclosure_mode}

Public disclosure can amplify norm-relevant messages, although this effect is context-dependent \citep{Arias2019,Luckner2026}. The framework clarifies the belief mechanism: public disclosure changes not only what the recipient infers from the message, but also what she believes others know and infer.

In this section, we hold the message form fixed and vary only the disclosure mode. We continue to let $x_k$ denote the report collected from the previous-group member $k\in K$. Under public disclosure, all current-group members observe the same set of reports $(x_k)_{k\in K}$, so $H_i=\{y_i,(x_k)_{k\in K}\}$ and $H_j=\{y_j,(x_k)_{k\in K}\}$ for all $j\in I\setminus\{i\}$. Public disclosure therefore changes individual $i$'s perceived social norm not only by shifting her own assessment of $S$, but also by changing what she believes other current-group members condition on when forming their own assessments.

\paragraph{Previous-group cue decomposition} 

To see how public disclosure differs from private disclosure, we decompose disclosure effects through previous-group cues before turning to how elicited messages summarize that information. For this decomposition, take $H_i=\{y_i,(y_k)_{k\in K}\}$ and $H_j=\{y_j,(y_k)_{k\in K}\}$ for each $j\in I\setminus\{i\}$.

Recall that $\bar{y}_K=\mathrm{avg}_{k\in K} y_k$. The previous-group cue realizations enter the updating formula through this average. Applying the same updating formula as in \eqref{eq:assessment_i_k} to each current-group member $j$ gives
\begin{equation*}
    \mathbb{E}[S\mid y_j, (y_k)_{k\in K}]
=
\frac{\nu_\epsilon}{\nu_\epsilon+(|K|+1)\nu_s}\mu_s
+
\frac{\nu_s}{\nu_\epsilon+(|K|+1)\nu_s}y_j
+
\frac{|K|\nu_s}{\nu_\epsilon+(|K|+1)\nu_s}\bar y_K .
\end{equation*}

Thus, $i$'s expectation of $j$'s updated assessment is 
\begin{equation*}
    \begin{aligned}
        &\mathbb{E}\!\left[
        \mathbb{E}[S\mid y_j,(y_k)_{k\in K}]
        \middle|
        y_i,(y_k)_{k\in K}
        \right] =
        \frac{\nu_\epsilon}{\nu_\epsilon+(|K|+1)\nu_s}\mu_s \\
        &\quad+
        \frac{\nu_s}{\nu_\epsilon+(|K|+1)\nu_s}
        \mathbb{E}[y_j\mid y_i,(y_k)_{k\in K}] +
        \frac{|K|\nu_s}{\nu_\epsilon+(|K|+1)\nu_s}\bar y_K .
    \end{aligned}
\end{equation*}

Since $\mathbb{E}[y_j|y_i, (y_k)_{k\in K}]=\mathbb{E}[S|y_i, (y_k)_{k\in K}]$, previous-group cues also shift $i$'s expectation of $y_j$ through her updated assessment of $S$. Using symmetry across $j\in I\setminus\{i\}$ and averaging gives
\begin{equation}\label{eq:public_bench}
    \begin{aligned}
        &\mathbb{E}\!\left[\mathrm{avg}_{j\in I\setminus \{i\}}\Big(\mathbb{E}[S\mid y_j,(y_k)_{k\in K}]\Big)\middle| y_i,(y_k)_{k\in K}\right] \\
    &=
    \left(\frac{\nu_s}{\nu_{\epsilon}+(|K|+1)\nu_s}\right)^2 y_i
    +\frac{\nu_{\epsilon}}{\nu_{\epsilon}+(|K|+1)\nu_s}
    \left(1+\frac{\nu_s}{\nu_{\epsilon}+(|K|+1)\nu_s}\right)\mu_s \\
    &\quad
    +\frac{|K|\nu_s}{\nu_{\epsilon}+(|K|+1)\nu_s}
    \left(1+\frac{\nu_s}{\nu_{\epsilon}+(|K|+1)\nu_s}\right)\bar{y}_K.
    \end{aligned}
\end{equation}
Thus, individual $i$'s perceived social norm is a weighted average of the prior mean $\mu_s$, her own cue $y_i$, and the previous-group average cue $\bar{y}_K$. 
The cue-decomposition effect, which is the coefficient on $\bar{y}_K$,
\begin{equation*}
    \frac{|K|\nu_s}{\nu_{\epsilon}+(|K|+1)\nu_s}
    \left(1+\frac{\nu_s}{\nu_{\epsilon}+(|K|+1)\nu_s}\right),
\end{equation*}
increases as $|K|$ increases, $\nu_s$ increases, and $\nu_{\epsilon}$ decreases.

Different from private disclosure, the public cue-decomposition effect combines two forces. The first is a direct public-updating channel: previous-group cue information enters other current-group members' assessments of $S$. The second is the inference channel also present under private disclosure: the same cue information shifts $i$'s belief about other current-group members' private cues, $\mathbb{E}[y_j\mid y_i,(y_k)_{k\in K}]$, through $i$'s updated assessment of $S$. This additional direct-updating channel is the source of the public/private difference analyzed below.

\paragraph{Disclosed message forms}
 
We now turn to the disclosed message forms. For the previous-group average personal values, $\bar{r}_K$, Section~\ref{sec:message_forms} gives the inverse mapping
 as in \eqref{eq:k_value_to_cue}:
\begin{equation*}
    \bar{y}_K= \frac{(\nu_s+\nu_{\epsilon})}{\nu_s} \bar{r}_K - \frac{\nu_{\epsilon}}{\nu_s} \mu_s.
\end{equation*}
Substituting into \eqref{eq:public_bench}, when $\bar{r}_K$ is publicly disclosed, individual $i$'s perceived social norm is
\begin{equation}\label{eq:public_k_per_value}
    \begin{aligned}
        &\mathbb{E}\!\left[\mathrm{avg}_{j \in I\setminus \{i\}}\!\left(\mathbb{E}[S\mid y_j,\bar{r}_K]\right)\middle|y_i,\bar{r}_K\right] \\
&= \left(\frac{\nu_s}{\nu_{\epsilon}+(|K|+1)\nu_s}\right)^2 y_i \\
&\quad+
\frac{(1-|K|)\nu_\epsilon\big(\nu_\epsilon+(|K|+2)\nu_s\big)}
{\big(\nu_{\epsilon}+(|K|+1)\nu_s\big)^2}\,\mu_s \\
&\quad+
\frac{|K|(\nu_s+\nu_\epsilon)\big(\nu_\epsilon+(|K|+2)\nu_s\big)}
{\big(\nu_{\epsilon}+(|K|+1)\nu_s\big)^2}\,\bar{r}_K.
    \end{aligned}
\end{equation}
Unlike the cue-decomposition expression in \eqref{eq:public_bench}, \eqref{eq:public_k_per_value} need not be a weighted average of $\mu_s$, $y_i$, and $\bar r_K$: when $|K|\geq 2$, the coefficient on $\mu_s$ is negative. This follows from the statistic-encoding relation between $\bar {r}_K$ and $\bar {y}_K$. The report $\bar r_K$ already combines previous participants' cue information with the prior mean, so using it to infer  $\bar y_K$ requires correcting for the prior pull embedded in the personal-value report. This correction reverses the comparative statics with respect to $\nu_s$ and $\nu_\epsilon$: the coefficient on $\bar r_K$ decreases in $\nu_s$ and increases in $\nu_\epsilon$, opposite to the cue-decomposition effect.

For the the previous-group average perceived social norms, Section~\ref{sec:message_forms} also gives the inverse mapping \eqref{eq:k_norm_to_cue}:
\begin{equation*}
    \bar{y}_K
    =
    \left(\frac{\nu_s+\nu_\epsilon}{\nu_s}\right)^2
    \hat{N}_K
    -
    \frac{\nu_\epsilon(\nu_\epsilon+
    2\nu_s)}{\nu_s^2}\mu_s.
\end{equation*}

Substituting the relation into \eqref{eq:public_bench} gives
\begin{equation}\label{eq:public_k_norm}
    \begin{aligned}
        &\mathbb{E}\!\left[\mathrm{avg}_{j\in I\setminus \{i\}}\!\left(\mathbb{E}[S\mid y_j,\hat{N}_K]\right)\middle|y_i,\hat{N}_K\right] \\
&= \left(\frac{\nu_s}{\nu_{\epsilon}+(|K|+1)\nu_s}\right)^2 y_i \\
&\quad+
\Bigg[
    \frac{\nu_{\epsilon}}{\nu_{\epsilon}+(|K|+1)\nu_s}
    \left(1+\frac{\nu_s}{\nu_{\epsilon}+(|K|+1)\nu_s}\right) \\
&\qquad\qquad
    -\frac{|K|\nu_s}{\nu_{\epsilon}+(|K|+1)\nu_s}
    \left(1+\frac{\nu_s}{\nu_{\epsilon}+(|K|+1)\nu_s}\right)
    \frac{\nu_\epsilon(\nu_\epsilon+2\nu_s)}{\nu_s^2}
\Bigg]\mu_s \\
&\quad+
\frac{|K|\nu_s}{\nu_{\epsilon}+(|K|+1)\nu_s}
\left(1+\frac{\nu_s}{\nu_{\epsilon}+(|K|+1)\nu_s}\right)
\frac{(\nu_s+\nu_\epsilon)^2}{\nu_s^2}
\,\hat{N}_K.
    \end{aligned}
\end{equation}
The coefficient on $\hat{N}_K$ is 
\begin{equation*}
    \frac{|K|(\nu_s+\nu_\epsilon)^2\big(\nu_\epsilon+(|K|+2)\nu_s\big)}
{\nu_s\big(\nu_{\epsilon}+(|K|+1)\nu_s\big)^2},
\end{equation*}
which is positive, increases in $|K|$, decreases in $\nu_s$, and increases in $\nu_{\epsilon}$. The reversal of the comparative statics with respect to $\nu_s$ and $\nu_\epsilon$, relative to the cue-decomposition effect, reflects the statistic-encoding effect.


Observed previous actions can be treated as an indirect version of the second message form. For example, in the UCE benchmark with interior choices, $\bar a_K=\hat N_K-1/(2\theta)$, so a publicly disclosed average action is equivalent, up to a known shift, to a publicly disclosed average perceived-norm report. Its effect therefore follows from the public-disclosure effect of $\hat {N}_K$.

\paragraph{Comparing disclosure modes}

The public-disclosure expressions above show that the coefficient on the disclosed statistic is larger than its private-disclosure counterpart for both $\bar r_K$ and $\hat N_K$. 

\begin{proposition}
For either disclosed statistic, $\bar r_K$ or $\hat N_K$, the marginal effect on individual $i$'s perceived social norm is larger under public disclosure than under private disclosure.
\end{proposition}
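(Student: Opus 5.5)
The plan is a direct coefficient comparison, following the proof of the preceding proposition. Both disclosure modes use the same inverse encoding: \eqref{eq:k_value_to_cue} for $\bar r_K$ and \eqref{eq:k_norm_to_cue} for $\hat N_K$. So the coefficient on either statistic equals the coefficient on $\bar y_K$ in the relevant cue-decomposition formula, multiplied by a common positive encoding factor. That factor is $(\nu_s+\nu_\epsilon)/\nu_s$ for $\bar r_K$ and $\big((\nu_s+\nu_\epsilon)/\nu_s\big)^2$ for $\hat N_K$. It is therefore enough to compare the cue-decomposition coefficients on $\bar y_K$ in \eqref{eq:private_bench} and \eqref{eq:public_bench}, and to note that multiplying both sides of the inequality by the same positive factor preserves it.

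Write $D:=\nu_\epsilon+(|K|+1)\nu_s$ to lighten notation. By \eqref{eq:private_bench}, the private coefficient on $\bar y_K$ is
\begin{equation*}
c^{\mathrm{priv}}=\frac{|K|\nu_s}{D}\cdot\frac{\nu_s}{\nu_s+\nu_\epsilon}.
\end{equation*}
By \eqref{eq:public_bench}, the public coefficient is
\begin{equation*}
c^{\mathrm{pub}}=\frac{|K|\nu_s}{D}\left(1+\frac{\nu_s}{D}\right).
\end{equation*}
Both share the positive factor $|K|\nu_s/D$. The comparison therefore reduces to $1+\nu_s/D>\nu_s/(\nu_s+\nu_\epsilon)$. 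This holds trivially, because the left side exceeds $1$ and the right side is strictly below $1$ whenever $\nu_\epsilon>0$.

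Multiplying by the encoding factors gives the statement for each statistic. For $\bar r_K$, the private coefficient $|K|\nu_s/D$ from \eqref{eq:private_k_per_value} should be compared with the public coefficient $|K|(\nu_s+\nu_\epsilon)(\nu_\epsilon+(|K|+2)\nu_s)/D^2$ from \eqref{eq:public_k_per_value}. Their ratio is
\begin{equation*}
\frac{(\nu_s+\nu_\epsilon)(D+\nu_s)}{\nu_s D},
\end{equation*}
which exceeds $1$ because both $(\nu_s+\nu_\epsilon)/\nu_s>1$ and $(D+\nu_s)/D>1$. For $\hat N_K$, the private coefficient is $|K|(\nu_s+\nu_\epsilon)/D$. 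Dividing the public coefficient stated after \eqref{eq:public_k_norm} by it yields the same ratio $(\nu_s+\nu_\epsilon)(D+\nu_s)/(\nu_s D)$, so the argument is identical.

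The only step I expect to need care is checking that the displayed closed forms agree with the factored expressions. In particular, $D+\nu_s=\nu_\epsilon+(|K|+2)\nu_s$, and the private coefficient for $\hat N_K$ simplifies from \eqref{eq:private_k_norm} after cancelling one factor of $\nu_s+\nu_\epsilon$. Once that algebra is confirmed, the inequality follows from the decomposition into the direct public-updating channel (the extra ``$1$'') and the private inference channel. It holds for every $|K|\ge 1$, $\nu_s>0$, $\nu_\epsilon>0$.
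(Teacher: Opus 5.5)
Your proposal is correct and follows the paper's proof essentially step for step. You compare the coefficients on $\bar y_K$ in \eqref{eq:public_bench} and \eqref{eq:private_bench}; your reduction to $1+\nu_s/D>\nu_s/(\nu_s+\nu_\epsilon)$, with $D=\nu_\epsilon+(|K|+1)\nu_s$, is equivalent to the paper's ratio $\frac{\nu_s+\nu_\epsilon}{\nu_s}\bigl(1+\frac{\nu_s}{D}\bigr)>1$. You then carry the inequality through the mode-independent positive encoding slopes, as the paper does, and your explicit cross-check against the closed-form coefficients in \eqref{eq:private_k_per_value}, \eqref{eq:public_k_per_value}, \eqref{eq:private_k_norm} and \eqref{eq:public_k_norm} is a valid consistency check but not needed for the argument.
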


The comparison isolates the belief-updating channel of public disclosure. The source of the difference is the cue-decomposition effect. A given statistic encodes $\bar{y}_K$ in the same way whether it is disclosed privately or publicly. What changes is how $\bar y_K$ enters perceived-norm updating. Under public disclosure, previous-group cue information affects not only $i$'s assessment of $S$, but also the assessments that $i$ expects other current-group members to form. This additional channel makes a given disclosed statistic have a larger cue-decomposition effect under public disclosure than under private disclosure.
\begin{proof}
The coefficient on $\bar{y}_K$ in \eqref{eq:public_bench} is
\begin{equation*}
    \frac{|K|\nu_s}{\nu_{\epsilon}+(|K|+1)\nu_s}
    \left(1+\frac{\nu_s}{\nu_{\epsilon}+(|K|+1)\nu_s}\right),
\end{equation*}
whereas the coefficient on $\bar{y}_K$ in \eqref{eq:private_bench} is
\begin{equation*}
    \frac{ |K|\nu_s^2 }{ (\nu_s+\nu_\epsilon)\big(\nu_\epsilon+(|K|+1)\nu_s\big)
}
\end{equation*}
Both coefficients are positive. Their ratio is
\begin{equation*}
    \frac{\nu_{\epsilon}+\nu_s}{\nu_s}\left(1+\frac{\nu_s}{\nu_{\epsilon}+(|K|+1)\nu_s}\right)>1.
\end{equation*}
since $\nu_s>0$, $\nu_\epsilon>0$, and $|K|\geq 1$. Thus, the cue-decomposition effect is stronger under public disclosure. Because the mappings from $\bar r_K$ and $\hat N_K$ to $\bar y_K$ have positive slopes and do not depend on the disclosure mode, the same inequality applies to the marginal effects of both disclosed statistics.
\end{proof}
\bigskip

This result shows why disclosure mode matters even when the disclosed statistic is held fixed. Unlike private disclosure, public disclosure adds a higher-order-belief channel through which individual $i$ expects other current-group members to update their assessments from the same information. This channel connects the result to expectation-based accounts of social norms \citep{Bicchieri2006} and to evidence that mutual knowledge can strengthen the behavioral effect of social-norm information \citep{HagerEtAl2025}.


\section{Conclusion}\label{sec:conclusion}


Social-information messages are widely used to influence perceived norms and related behaviors. These messages differ both in what they report and in how they are disclosed. The framework developed here clarifies why these differences may matter for perceived social norms: disclosed personal values and disclosed perceived social norms need not be informationally equivalent, and the effect of a disclosed statistic need not be the same when it is observed privately rather than publicly.


The analysis yields two formal implications. First, holding the disclosure mode fixed, disclosed perceived social norms have a stronger marginal effect on current perceived norms than disclosed personal values. Second, holding the disclosed statistic fixed, public disclosure has a stronger effect than private disclosure. The belief-based framework not only makes these comparisons possible, but also reveals the mechanisms behind them. Different disclosed belief measures encode previous participants' private cue information in different ways, generating different statistic-encoding effects across message forms. Public disclosure, relative to private disclosure, adds a further belief channel: it changes what the recipient believes others know and infer from the same statistic, thereby amplifying the cue-decomposition effect.


As a broad implication, the effect of a social-information intervention can depend on what statistic the message reports and how it is disclosed, not only on context or population. Therefore, social-information messages should specify how the message is constructed, including what reports were elicited from previous individuals and how those reports were aggregated into the reported statistic. Whether the message is received privately or observed publicly by peers is also not merely a matter of delivery: it determines whether the message changes only the recipient's own inference, or also her beliefs about what others know and infer. Thus, the effect of a norm-relevant message depends not only on the value it reports, but also on the belief process that produced it and on who is known to observe it.

\clearpage
\bibliographystyle{chicago}
\bibliography{ref}

\end{document}